\newtheoremstyle{note}
  {\topsep/2}               % ABOVE SPACE
  {\topsep/2}               % BELOW SPACE
  {}                      % BODY FONT
  {\parindent}            % INDENT (empty value is the same as 0pt)
  {\itshape}              % HEAD FONT
  {.}                     % HEAD PUNCTUATION
  {5pt plus 1pt minus 1pt}% HEAD SPACE
  {}
\theoremstyle{note}
\newtheorem{proposition}{Proposition}
\theoremstyle{definition}
\theoremstyle{remark}
\newcommand{\tr}{\operatorname{tr}}
 \newcommand{\rmd}{\mathrm{d}}
 \newcommand{\rme}{\mathrm{e}}
 \newcommand{\rmi}{\mathrm{i}}
 \newcommand{\rmA}{\mathrm{A}}
 \newcommand{\rmE}{\mathrm{E}}
 \newcommand{\rmM}{\mathrm{M}}
 \newcommand{\rmP}{\mathrm{P}}
 \newcommand{\rmU}{\mathrm{U}}
  \newcommand{\caB}{\mathcal{B}}
  \newcommand{\caE}{\mathcal{E}}
 \newcommand{\caH}{\mathcal{H}}
\newcommand{\caU}{\mathcal{U}}
\newcommand{\caV}{\mathcal{V}}
 \newcommand{\bbZ}{\mathbb{Z}}
 \newcommand{\mkB}{\mathfrak{B}}
\newcommand{\be}{\begin{equation}}
\newcommand{\ee}{\end{equation}}
\newcommand{\ba}{\begin{align}}
\newcommand{\ea}{\end{align}}
\def\<{\langle}  %% overriding the original command \<
\def\>{\rangle}  %% overriding the original command \>
\def\eqref#1{\textup{(\ref{#1})}}  %% overriding the original command \eqref
\newcommand{\eref}[1]{Eq.~\textup{(\ref{#1})}}
\newcommand{\Eref}[1]{Equation~\textup{(\ref{#1})}}
\newcommand{\esref}[1]{Eqs.~\textup{(\ref{#1})}}
\newcommand{\tref}[1]{Table~\ref{#1}}
\newcommand{\pref}[1]{Proposition~\ref{#1}}
\newcommand{\cref}[1]{Conjecture~\ref{#1}}
\newcommand{\Cref}[1]{Conjecture~\ref{#1}}
\newcommand{\rcite}[1]{Ref.~\cite{#1}}
\newcommand{\rscite}[1]{Refs.~\cite{#1}}
\begin{document}
\title{Efficient verification of  quantum gates with local operations}

\author{Huangjun Zhu}
\email{zhuhuangjun@fudan.edu.cn}

\affiliation{Department of Physics and Center for Field Theory and Particle Physics, Fudan University, Shanghai 200433, China}

\affiliation{State Key Laboratory of Surface Physics, Fudan University, Shanghai 200433, China}

\affiliation{Institute for Nanoelectronic Devices and Quantum Computing, Fudan University, Shanghai 200433, China}

\affiliation{Collaborative Innovation Center of Advanced Microstructures, Nanjing 210093, China}

\author{Haoyu Zhang}
\affiliation{Department of Physics and Center for Field Theory and Particle Physics, Fudan University, Shanghai 200433, China}

\affiliation{State Key Laboratory of Surface Physics, Fudan University, Shanghai 200433, China}

\begin{abstract}
Efficient verification of the functioning of quantum devices is a key to the development of quantum technologies, but is a daunting task as the system size increases. Here we propose  a simple and general framework for verifying unitary transformations that can be applied to both individual quantum gates and gate sets, including quantum circuits. This framework enables efficient verification of  many important unitary transformations, including but not limited to  all bipartite unitaries, Clifford unitaries, generalized controlled-$Z$ gates,  generalized controlled-NOT gates,  the  controlled-SWAP gate, and permutation transformations. For all these unitaries, the sample complexity increases at most linearly with the system size  and is often independent of the  system size.
Moreover, little overhead is incurred even if one can only prepare Pauli eigenstates and perform local  measurements.
Our approach  is applicable in many scenarios in which randomized benchmarking (RB) does not apply and is thus instrumental to quantum computation and many other applications in quantum information processing. 
\end{abstract}

\date{\today}
\maketitle

\section{Introduction}      
Quantum technologies  promise to dramatically boost our capability in secure communication, fast computation, and efficient simulation of quantum many-body systems.  To harness the power of quantum technologies, it is crucial to characterize and verify quantum devices with a high precision \cite{EiseHWR19}. The problem is particularly pressing in the context of quantum computation, in which efficient characterization of quantum gates and circuits has become a bottleneck. To characterize quantum gates with quantum process tomography, the resource overhead  increases exponentially with the system size. As a popular alternative, 
randomized benchmarking (RB) \cite{EmerAZ05,KnilLRB08,MageGE11,WallF14} is much more efficient and is robust against state preparation and measurement  (SPAM) errors. However, RB usually 
relies on strong assumptions on the set of gates to be benchmarked, such as group structure and the property of being a unitary 2-design. Although tremendous efforts have been directed to relaxing these assumptions \cite{MageGJR12,CariWE15,CrosMBS16,HarpF17,OnorWE19,HelsXVW19}, the applicability of RB is still quite limited.

In this paper we propose a general and efficient framework for verifying 
unitary processes  based on local operations, that is, 
local state preparation and measurements; only Pauli operations are required in many cases of practical interest.
Our approach  can be applied to individual gates as well as gate sets, including quantum circuits, even in various scenarios in which RB does not apply. 
It enables efficient verification of
many important unitary transformations, including but not limited to all (qubit and qudit) bipartite unitaries,  Clifford unitaries, generalized controlled-$Z$ gates,  generalized controlled-NOT gates  (CNOT),  the  controlled-SWAP gate (CSWAP, also known as the  Fredkin gate) \cite{NielC10book}, and permutation transformations. To achieve infidelity $\epsilon$ and significance level $\delta$ in verifying these transformations, the number of tests required (sample complexity) is at most $\lceil 2n\epsilon^{-1}\ln\delta^{-1}\rceil $ ($n$ is the number of parties), and it is even independent of the system size in many cases of practical interest.

\bigskip

\section{Channel-state duality}               Let $\caH$ be a Hilbert space of dimension $d$ and $\mkB(\caH)$ the space of bounded linear operators on $\caH$. 
A quantum channel or process $\Lambda$ on $\mkB(\caH)$ is represented by a completely-positive and trace-preserving (CPTP) map \cite{NielC10book}. Given a unitary transformation $\caU$ associated with the unitary operator $U$ on $\caH$, let $\caU^\dag$ denote the transformation associated with $U^\dag$, the Hermitian conjugate of $U$. The \emph{average gate fidelity} between $\Lambda$ and $\caU$ reads
\begin{align}
F_\rmA(\Lambda, \caU):=&\int \tr[ \Lambda(|\psi\>\<\psi|)U|\psi\>\<\psi|U^\dag]\rmd\psi\nonumber\\
=&\int \<\psi|\caE(|\psi\>\<\psi|)|\psi\>\rmd\psi,
\end{align}
where the integral is taken over the normalized Haar measure and $\caE=\caU^\dag\circ \Lambda$ is the composition of $\Lambda$ and $\caU^\dag$. The  average gate infidelity reads $\epsilon_\rmA:=1-F_\rmA$.

The idea of channel-state duality \cite{Choi75}
will play a key role in  studying quantum gate verification (QGV). Quantum channels on  $\mkB(\caH)$ are in one-to-one correspondence with  Choi states on $\caH\otimes\caH$ \cite{Choi75}. More precisely, the  \emph{Choi state} associated with the channel $\Lambda$ is defined as 
\begin{equation}
\chi_\Lambda:=(\Lambda\otimes 1)(|\Phi\>\<\Phi|),
\end{equation}
where $\Phi=(\sum_j |jj\>)/\sqrt{d}$ is a maximally entangled state in $\caH\otimes \caH$.
The second reduced state of $\chi_\Lambda$ is completely mixed, that is, $\tr_1\chi_\Lambda=1/d$. Conversely, any quantum state on $\caH\otimes \caH$ that  satisfies this condition determines a quantum channel via the duality relation. Denote by $\rho^*$ the complex conjugate of $\rho$ in  the computational basis, then we have
\begin{equation}
\Lambda(\rho)=d\tr_2[\chi_\Lambda (1\otimes \rho^*)].
\end{equation}

The \emph{entanglement fidelity} or process fidelity between $\Lambda$ and $\caU$  is defined as the fidelity between the Choi states $\chi_\Lambda$ and $\chi_\caU$ \cite{Schu96}, that is, 
\begin{equation}
F_\rmE(\Lambda, \caU):=\tr(\chi_\Lambda\chi_\caU)=\<\Phi| \chi_\caE|\Phi\>, 
\end{equation}
where $\caE=\caU^\dag\circ \Lambda$; the entanglement or process infidelity reads $\epsilon_\rmE:=1-F_\rmE$. 
By \rscite{HoroHH99G,Niel02}, we have
\begin{equation}\label{eq:AGFvsEF}
F_\rmA=\frac{d F_\rmE+1}{d+1},\quad  \epsilon_\rmA=\frac{d \epsilon_\rmE}{d+1}. 
\end{equation}
 When $\caU$ is the identity, $F_\rmA(\Lambda, \caU)$ and $F_\rmE(\Lambda, \caU)$
are abbreviated as $F_\rmA(\Lambda)$ and $F_\rmE(\Lambda)$, respectively.

\section{Quantum gate verification}      
\subsection{Basic framework}
Consider a device that is supposed to 
 perform the unitary transformation $\caU$, but actually realizes an unknown  channel $\Lambda$. To verify whether the channel $\Lambda$ is sufficiently close to the target unitary, we can  pick a  pure test state $\rho_j=|\psi_j\>\<\psi_j|$ randomly with probability $p_j$ from an ensemble of test states $\{\rho_j, p_j\}_j$ and feed it into the channel. Here a state ensemble refers to a weighted set (including a multiset) of quantum states in which the weights form a probability distribution. Then we can  verify whether the output state $\Lambda(\rho_j)$  is sufficiently close to the target state $\caU(\rho_j)=U\rho_j U^\dag$. 
To this end, we perform two-outcome tests $\{E_{l|j}, 1-E_{l|j}\}$ from a set of  accessible tests depending on the input state $\rho_j$ \cite{PallLM18,ZhuH19AdS,ZhuH19AdL}. 
The test operator $E_{l|j}$ corresponds to passing the test and satisfies the conditions $0\leq E_{l|j}\leq1$ and  $E_{l|j}\caU(\rho_j)=\caU(\rho_j)$, so that the target output state $\caU(\rho_j)$ can always pass the test.

Suppose the test $E_{l|j}$ is performed with probability $p_{l|j}$ given the test state $\rho_j$, then the passing probability of $\Lambda(\rho_j)$ reads $\tr[\Omega_j\Lambda(\rho_j)]$, where $\Omega_j=\sum_lp_{l|j} E_{l|j}$ is a \emph{verification operator} for $\caU(\rho_j)$ \cite{PallLM18,ZhuH19AdS,ZhuH19AdL}. The overall average passing probability reads
\begin{equation}\label{eq:AvePassProb}
\sum_{j\,l} p_j p_{l|j} \tr[\Lambda(\rho_j)E_{l|j}]=\tr(\tilde{\Theta}\chi_\Lambda)=\tr(\Theta\chi_\caE), 
\end{equation}
where
\begin{align}\label{eq:PVO}
\tilde{\Theta}:=d\sum_{j}p_j \Omega_j\otimes \rho_j^*,\quad
\Theta&:=d\sum_{j}p_j \caU^\dag(\Omega_j)\otimes \rho_j^*
\end{align}
are called \emph{process} or \emph{channel verification operators}. Here the first variant $\tilde{\Theta}$ is more natural, but the second variant $\Theta$ is more convenient for technical analysis. When  $\Lambda=\caU$ (which means $\chi_\caE=|\Phi\>\<\Phi|$), we have
\begin{equation}
\tr(\tilde{\Theta}\chi_\Lambda)=\tr(\Theta\chi_\caE)=\sum_j p_j \tr[\Omega_j\caU(\rho_j)]=\sum_j p_j=1, 
\end{equation}
given that  $\Omega_j\geq \caU(\rho_j)$ and  $\caU(\rho_j)$ is supported in the eigenspace of $\Omega_j$ with the largest eigenvalue~1. So the passing probability is 1  as expected.

When  $\Omega_j= \caU(\rho_j)$, the process verification operator $\Theta$ reduces to the \emph{preparation operator}
\begin{align}
\Theta_\rmP:=d\sum_j p_j\rho_j\otimes \rho_j^*.
\end{align}
This operator encodes the key information about the ensemble of test states and determines the quality of this ensemble for QGV. In general, the spectral gap of $\Omega_j$ is defined as $\nu_j=1-\beta_j$, where  
 $\beta_j:=\|\Omega_j-\caU(\rho_j)\|$
is  the second largest eigenvalue of $\Omega_j$ \cite{PallLM18,ZhuH19AdS,ZhuH19AdL}. The \emph{measurement spectral gap} of the verification strategy is the minimum of $\nu_j$ over all $j$, that is,
\begin{align}\label{eq:nuM}
\nu_\rmM:=\min_j \nu_j=1-\beta_\rmM,
\end{align}
where $\beta_\rmM:=\max_j \beta_j$. As we shall see shortly, to a large extent $\nu_\rmM$ characterizes the quality of  the verification operators of output states. 
Since  $\Omega_j\leq \nu_j \caU(\rho_j)+\beta_j$, we can deduce the following inequality,
\begin{align}\label{eq:ThetaBound}
\Theta&\leq d\sum_j p_j (\nu_j\rho_j+\beta_j)\otimes \rho_j^* \leq \nu_\rmM\Theta_\rmP
+\beta_\rmM\otimes \sum_j d p_j \rho_j^*,
\end{align}
which is useful to studying the efficiency of  QGV.

\subsection{Number of tests required}
Now we repeat the preparation and test procedure $N$ times and accept the device iff all tests are passed. In general, the channels $\Lambda_1, \Lambda_2, \ldots, \Lambda_N$ realized over the $N$ runs may be different due to inevitable fluctuations, but here we assume that they are independent for simplicity. (More general situations can be addressed using the recipe for the adversarial scenario proposed in \rscite{ZhuH19AdS,ZhuH19AdL}.) Then the acceptance probability  reads $\prod_{r=1}^N \tr(\Theta\chi_{\caE_r})$, 
where $\caE_r=\caU^\dag \circ\Lambda_r$. Our goal is to ensure that the false acceptance probability is smaller than a given threshold, the significance level $\delta$, whenever the average gate (or entanglement) fidelity is smaller than a given threshold.

For $0\leq \epsilon\leq 1$, define
\begin{align}
p_\rmA(\Theta,\epsilon)&:=\max_{F_\rmA(\caE)\leq  1-\epsilon} \tr(\Theta\chi_\caE)=\max_{F_\rmA(\Lambda, \caU)\leq  1-\epsilon} \tr(\tilde{\Theta}\chi_\Lambda),\label{eq:MaxPPAF} \\
p_\rmE(\Theta,\epsilon)&:=\max_{F_\rmE(\caE)\leq  1-\epsilon} \tr(\Theta\chi_\caE)=\max_{F_\rmE(\Lambda, \caU)\leq  1-\epsilon} \tr(\tilde{\Theta}\chi_\Lambda),  \label{eq:MaxPPEF}
\end{align} 
where $\tilde{\Theta}=(\caU\otimes 1)(\Theta)$ by \eref{eq:PVO}.
Then the passing probability in \eref{eq:AvePassProb} is upper bounded by $p_\rmA(\Theta,\epsilon)$ [$p_\rmE(\Theta,\epsilon)$]
whenever the average gate fidelity $F_\rmA(\Lambda,\caU)$
[entanglement fidelity $F_\rmE(\Lambda,\caU)$] is upper bounded by $1-\epsilon$. According to \eref{eq:AGFvsEF}, we have
\begin{equation}\label{eq:pApE}
p_\rmA(\Theta,\epsilon)=p_\rmE(\Theta,d^{-1}(d+1)\epsilon),
\end{equation}
so to compute $p_\rmA(\Theta,\epsilon)$, it suffices to compute $p_\rmE(\Theta,\epsilon)$. In general, it is not easy to derive an analytical formula for $p_\rmE(\Theta,\epsilon)$, but it is easy to compute its value via semidefinite programming (SDP),
\begin{equation}\label{eq:pESDP}
\begin{aligned}
&\mathrm{maximize}  \quad \tr(\Theta\chi_\caE) \quad \mbox{subject to} \\
&\chi_\caE\geq0, \quad \tr_1 \chi_\caE=1/d,\quad 
\<\Phi|\chi_\caE|\Phi\>\leq 1-\epsilon. 
\end{aligned}
\end{equation}
Alternatively, $p_\rmE(\Theta,\epsilon)$ can be computed as follows,
\begin{equation}\label{eq:pESDP2}
\begin{aligned}
&\mathrm{maximize}  \quad \tr(\tilde{\Theta}\chi_\Lambda) \quad \mbox{subject to} \\
&\chi_\Lambda\geq0, \quad \tr_1 \chi_\Lambda=1/d,\quad 
\tr(\chi_\Lambda\chi_\caU)\leq 1-\epsilon. 
\end{aligned}
\end{equation}
These results are reminiscent of  the counterpart in quantum state verification \cite{PallLM18,ZhuH19AdS,ZhuH19AdL}.  

As a simple implication of 
 \esref{eq:MaxPPAF}-\eqref{eq:pESDP}, the following proposition is instructive to understanding QGV. 
\begin{proposition}\label{pro:PassProbCC}
	$p_\rmA(\Theta,\epsilon)$  and $p_\rmE(\Theta,\epsilon)$ are concave and nonincreasing in $\epsilon$, but are convex in $\Theta$.
\end{proposition}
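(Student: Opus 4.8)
The plan is to read all three properties off the semidefinite-program characterizations \eqref{eq:pESDP} and \eqref{eq:pESDP2}: both $p_\rmE(\Theta,\epsilon)$ and $p_\rmA(\Theta,\epsilon)$ are optimal values of a program of the common form ``maximize $\tr(\Theta\chi_\caE)$ over $\chi_\caE\ge0$ with $\tr_1\chi_\caE=1/d$ and one further linear inequality $f(\chi_\caE)\le1-\epsilon$.'' For $p_\rmE$ this is literally \eqref{eq:pESDP} with $f(\chi_\caE)=\<\Phi|\chi_\caE|\Phi\>$, which is linear in $\chi_\caE$. For $p_\rmA$, the constraint $F_\rmA(\caE)\le1-\epsilon$ is by \eqref{eq:AGFvsEF} equivalent to $\<\Phi|\chi_\caE|\Phi\>\le\bigl((d+1)(1-\epsilon)-1\bigr)/d$, again a single linear inequality in $\chi_\caE$ whose right-hand side is affine and strictly decreasing in $\epsilon$; equivalently, one may just invoke \eqref{eq:pApE}. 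So it suffices to treat $p_\rmE$. Write $S_\epsilon:=\{\chi_\caE:\chi_\caE\ge0,\ \tr_1\chi_\caE=1/d,\ \<\Phi|\chi_\caE|\Phi\>\le1-\epsilon\}$, so $p_\rmE(\Theta,\epsilon)=\max_{\chi_\caE\in S_\epsilon}\tr(\Theta\chi_\caE)$; this $S_\epsilon$ is the intersection of the compact convex set $\{\chi_\caE\ge0:\tr_1\chi_\caE=1/d\}$ (bounded because $\tr\chi_\caE=1$) with a half-space, hence compact and convex, and it is nonempty for every $\epsilon\in[0,1]$ (e.g. the Choi state of a channel implementing a traceless unitary lies in $S_1$), so the maximum is attained.

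Then I would verify the three properties for $p_\rmE$ in turn. \emph{Nonincreasing in $\epsilon$:} if $\epsilon\le\epsilon'$ then $1-\epsilon\ge1-\epsilon'$, hence $S_{\epsilon'}\subseteq S_\epsilon$ and $p_\rmE(\Theta,\epsilon')\le p_\rmE(\Theta,\epsilon)$. \emph{Concave in $\epsilon$:} fix $\epsilon_0,\epsilon_1\in[0,1]$, $\lambda\in[0,1]$, set $\epsilon_\lambda:=(1-\lambda)\epsilon_0+\lambda\epsilon_1$, and pick optimizers $\chi_i\in S_{\epsilon_i}$ with $\tr(\Theta\chi_i)=p_\rmE(\Theta,\epsilon_i)$; then $\chi_\lambda:=(1-\lambda)\chi_0+\lambda\chi_1$ is positive, satisfies $\tr_1\chi_\lambda=1/d$, and obeys $\<\Phi|\chi_\lambda|\Phi\>=(1-\lambda)\<\Phi|\chi_0|\Phi\>+\lambda\<\Phi|\chi_1|\Phi\>\le(1-\lambda)(1-\epsilon_0)+\lambda(1-\epsilon_1)=1-\epsilon_\lambda$, so $\chi_\lambda\in S_{\epsilon_\lambda}$ and $p_\rmE(\Theta,\epsilon_\lambda)\ge\tr(\Theta\chi_\lambda)=(1-\lambda)p_\rmE(\Theta,\epsilon_0)+\lambda p_\rmE(\Theta,\epsilon_1)$. \emph{Convex in $\Theta$:} for each fixed $\chi_\caE$ the map $\Theta\mapsto\tr(\Theta\chi_\caE)$ is linear, and $p_\rmE(\cdot,\epsilon)$ is the pointwise supremum of this family over $\chi_\caE\in S_\epsilon$, hence convex on the convex set of process verification operators. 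The same statements for $p_\rmA$ then follow either by repeating the argument verbatim with the feasible set for the $F_\rmA$ constraint in place of $S_\epsilon$, or by composing $p_\rmE(\Theta,\cdot)$ with the increasing affine map $\epsilon\mapsto d^{-1}(d+1)\epsilon$ of \eqref{eq:pApE}, which preserves both concavity and the nonincreasing property and leaves the $\Theta$-dependence untouched.

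I do not expect a genuine obstacle: this is the standard fact that the optimal value of a linearly parametrized family of linear programs is concave and monotone in the parameter and convex in the cost operator, and the only real work is checking that the $\epsilon$-dependent constraint is indeed linear in $\chi_\caE$ with an affine, monotone right-hand side (immediate for $F_\rmE$, reducible to it for $F_\rmA$). The sole bookkeeping caveat is that for $\epsilon$ outside the feasible range (large $\epsilon$, where the fidelity bound cannot be met by any channel) the maximum in \eqref{eq:MaxPPAF}--\eqref{eq:MaxPPEF} is over the empty set; adopting the convention that this equals $-\infty$ keeps the conclusions consistent in the extended-real sense, and this regime is in any case irrelevant to the applications.
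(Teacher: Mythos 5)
Your proof is correct and follows the same route the paper intends: the paper presents this proposition as a direct consequence of the definitions \eqref{eq:MaxPPAF}--\eqref{eq:MaxPPEF} and the SDP characterization \eqref{eq:pESDP}, and you have simply spelled out the standard parametric-optimization facts (shrinking feasible set, convex combination of optimizers, pointwise supremum of linear functionals) that make that implication work. The reduction of $p_\rmA$ to $p_\rmE$ via \eqref{eq:pApE} and the caveat about the empty feasible set for large $\epsilon$ are both consistent with the paper's treatment.
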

Here the concavity of $p_\rmE(\Theta,\epsilon)$  means
\begin{equation}
p_\rmE(\Theta,\epsilon)\geq \mu p_\rmE(\Theta,\epsilon_1)+(1-\mu)p_\rmE(\Theta,\epsilon_2)
\end{equation}  if $\epsilon= \mu \epsilon_1+(1-\mu)\epsilon_2$ and $0\leq \mu\leq 1$. By contrast, the convexity means 
\begin{equation}
p_\rmE(\Theta,\epsilon)\leq \mu p_\rmE(\Theta_1,\epsilon)+(1-\mu) p_\rmE(\Theta_2,\epsilon)
\end{equation}
if $\Theta=\mu \Theta_1 +(1-\mu)\Theta_2$. Similar properties  hold for $p_\rmA(\Theta,\epsilon)$.

 Thanks to \pref{pro:PassProbCC}, the acceptance probability $\prod_{r=1}^N \tr(\Theta\chi_{\caE_r})$ can be upper bounded by
\begin{align}\label{eq:AcceptPUP}
\prod_{r=1}^N p_\rmE(\Theta,\epsilon_r)\leq \biggl[\frac{1}{N}\sum_r p_\rmE(\Theta,\epsilon_r)\biggr]^N\leq p_\rmE(\Theta,\bar{\epsilon})^N,
\end{align}
where $\epsilon_r=1-F_\rmE(\caE_r)$ and  $\bar{\epsilon}=\sum\epsilon_r/N$ are the entanglement  infidelities over the $N$ runs and their average. The upper bound in \eref{eq:AcceptPUP} can be saturated when all $\epsilon_r$ are equal.  This equation still holds if $p_\rmE$ is replaced by $p_\rmA$.  Define $N_\rmE(\epsilon,\delta,\Theta)$ as the 
minimum number of tests required  to verify  $\caU$ within entanglement infidelity $\epsilon$ and significance level $\delta$, which means $p_\rmE(\Theta,\epsilon)^N\leq \delta$. Then we have
\begin{equation}\label{eq:NumTest}
N_\rmE(\epsilon,\delta,\Theta)=\biggl\lceil
\frac{\ln\delta}{\ln p_\rmE(\Theta,\epsilon)}\biggr\rceil,
\end{equation}
assuming $p_\rmE(\Theta,\epsilon)<1$.
If  entanglement infidelity is replaced by the average gate infidelity, then  the minimum number  is 
\begin{equation}\label{eq:NumTestNA}
N_\rmA(\epsilon,\delta,\Theta)=\biggl\lceil
\frac{\ln\delta}{\ln p_\rmA(\Theta,\epsilon)}\biggr\rceil=N_\rmE(d^{-1}(d+1)\epsilon,\delta,\Theta),
\end{equation}
where the second equality follows from \eref{eq:pApE}.

\subsection{Optimal verification strategies}
To minimize the number of tests required to verify $\caU$, we need to minimize $p_\rmE(\Theta,\epsilon)$ or $p_\rmA(\Theta,\epsilon)$ over the process verification  operator $\Theta$ in \eref{eq:PVO}. 
Given a unitary operator $V$  on $\caH$ and the associated unitary transformation $\caV$, let $\tilde{\Theta}_V$ and $\Theta_V$ be the process verification operators constructed from $\tilde{\Theta}$ and $\Theta$ by replacing $\rho_j$ with $\caV(\rho_j)$ and $E_{l|j}$ with $\caU\caV\caU^\dag(E_{l|j})$,  that is,
\begin{align}
\tilde{\Theta}_V=(\caU\caV\caU^\dag\otimes \caV^*) (\tilde{\Theta}),\quad 
\Theta_V=(\caV\otimes \caV^*) (\Theta). 
\end{align}
Here $\caV^*$ is the transformation associated with $V^*$,  the complex conjugate of $V$ in  the computational basis. Then  $p_\rmE(\Theta_V,\epsilon)=p_\rmE(\Theta,\epsilon)$ and $p_\rmE((\Theta+\Theta_V)/2,\epsilon)\leq p_\rmE(\Theta,\epsilon)$, which implies that averaging over unitarily equivalent strategies cannot decrease the efficiency. 

In addition, the efficiency cannot decrease if we replace $\Omega_j$ with $\caU(\rho_j)$ given that $\Omega_j\geq \caU(\rho_j)$ by assumption. 
Therefore, $p_\rmE(\Theta,\epsilon)$ and  $p_\rmA(\Theta,\epsilon)$ are minimized   when $\Omega_j=\caU(\rho_j)$ and the ensemble $\{\rho_j, p_j\}_j$ of test states is distributed according to the Haar measure.  Then we have
\begin{align}
\Theta&=\Theta_\rmP= \frac{d|\Phi\>\<\Phi|+1}{d+1}, \label{eq:PreparationOptimal}\\
p_\rmE(\Theta,\epsilon)&=1-\frac{d}{d+1}\epsilon, \quad p_\rmA(\Theta,\epsilon)=1-\epsilon,
\end{align} 
which implies that 
\begin{equation}\label{eq:NumTestMin}
N_\rmA(\epsilon,\delta,\Theta)=\biggl\lceil
\frac{\ln\delta}{\ln(1-\epsilon)}\biggr\rceil
\leq
\biggl\lceil
\frac{1}{\epsilon}\ln\frac{1}{\delta}\biggr\rceil
\end{equation}
and that  $N_\rmE(\epsilon,\delta,\Theta)=N_\rmA(d\epsilon/(d+1),\delta,\Theta)$ [cf.~\eref{eq:NumTestNA}].
These results are independent of the unitary transformation to be verified.
For a general verification operator $\Theta$, the probability $p_\rmA(\Theta,\epsilon)$ satisfies the bound $p_\rmA(\Theta,\epsilon)\geq 1-\epsilon$, so the number of tests required
cannot be smaller.

Recall that an ensemble of pure states $\{\rho_j, p_j\}_j$ is a 2-design \cite{ReneBSC04,RoyS07,ZhuH19O} if 
\begin{equation}
\sum_j p_j\rho_j\otimes \rho_j^*=\frac{d|\Phi\>\<\Phi|+1}{d(d+1)}.
\end{equation}
If the ensemble of test states forms a 2-design, then \eref{eq:PreparationOptimal} still holds. Therefore, in the above optimal verification protocol,  Haar random test states can be replaced by pseudo-random test states drawn from a 2-design, which is much easier to realize. These observations yield the following proposition.

\begin{proposition}
Any verification strategy characterized by the process verification operator $\Theta$ satisfies
\begin{equation}\label{eq:NumTestMin2}
p_\rmA(\Theta,\epsilon)\geq 1-\epsilon,\quad 
N_\rmA(\epsilon,\delta,\Theta)\geq \biggl\lceil
\frac{\ln\delta}{\ln(1-\epsilon)}\biggr\rceil. 
\end{equation}
Both lower bounds are saturated if the ensemble of test states forms a 2-design, and the verification operator for each output state is equal to the state projector.
\end{proposition}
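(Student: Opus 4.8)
The plan is to establish the converse inequality $p_\rmA(\Theta,\epsilon)\ge 1-\epsilon$ by exhibiting, for each $\epsilon$, an explicit channel $\caE$ with $F_\rmA(\caE)\le 1-\epsilon$ that already passes $\Theta$ with probability at least $1-\epsilon$; then the definition \eref{eq:MaxPPAF} immediately gives the bound. Two elementary facts about any process verification operator $\Theta$ of the form \eref{eq:PVO} do all the work: $\<\Phi|\Theta|\Phi\>=1$, since the target unitary passes with certainty, and $\tr\Theta=d\sum_j p_j\tr\Omega_j\ge d$, since $\Omega_j\ge\caU(\rho_j)$ forces $\tr\Omega_j\ge 1$.

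I would then test $p_\rmA$ against two explicit Choi states. The first is $|\Phi\>\<\Phi|$ itself (the identity channel), feasible at $\epsilon=0$ and giving $p_\rmA(\Theta,0)=1$. The second is $\sigma=(1-|\Phi\>\<\Phi|)/(d^2-1)$, which one checks is a genuine Choi state (it is positive and has $\tr_1\sigma=1/d$) with $F_\rmE=0$, hence $F_\rmA=1/(d+1)$ by \eref{eq:AGFvsEF}; it is therefore feasible at $\epsilon=d/(d+1)$ and yields $p_\rmA(\Theta,d/(d+1))\ge\tr(\Theta\sigma)=(\tr\Theta-1)/(d^2-1)\ge 1/(d+1)$. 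The two data points $(0,1)$ and $(d/(d+1),1/(d+1))$ lie on the line $y=1-\epsilon$, so the concavity of $p_\rmA(\Theta,\cdot)$ from \pref{pro:PassProbCC} forces $p_\rmA(\Theta,\epsilon)\ge 1-\epsilon$ throughout $[0,d/(d+1)]$; for larger $\epsilon$ the constraint set in \eref{eq:MaxPPAF} is empty, since $F_\rmE=\<\Phi|\chi_\caE|\Phi\>\ge 0$ means no channel has $\epsilon_\rmA>d/(d+1)$. (For small $\epsilon$ one may avoid invoking concavity by using the explicit witness $\caE_x=(1-x)\caI+x\mathcal{D}$, with $\mathcal{D}$ the completely depolarizing channel and $x=d\epsilon/(d-1)$: its Choi state is $(1-x)|\Phi\>\<\Phi|+x/d^2$, which gives $\epsilon_\rmA(\caE_x)=x(d-1)/d=\epsilon$ and $\tr(\Theta\chi_{\caE_x})\ge(1-x)+x/d=1-\epsilon$.)

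The bound on $N_\rmA$ then drops out of \eref{eq:NumTestNA}: if $p_\rmA(\Theta,\epsilon)<1$ then $\ln(1-\epsilon)\le\ln p_\rmA(\Theta,\epsilon)<0$, so dividing the negative number $\ln\delta$ by these gives $\ln\delta/\ln p_\rmA(\Theta,\epsilon)\ge\ln\delta/\ln(1-\epsilon)$, an inequality preserved by the ceiling; if $p_\rmA(\Theta,\epsilon)=1$ then verification is impossible and $N_\rmA=\infty$, so the bound holds trivially.

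For the saturation claim, take $\Omega_j=\caU(\rho_j)$ (the verification operator for every output state is the state projector), so that $\Theta=\Theta_\rmP=d\sum_j p_j\rho_j\otimes\rho_j^*$; if moreover $\{\rho_j,p_j\}_j$ is a 2-design this equals $(d|\Phi\>\<\Phi|+1)/(d+1)$ as in \eref{eq:PreparationOptimal}, and the computation there gives $p_\rmA(\Theta,\epsilon)=1-\epsilon$ exactly, whence $N_\rmA(\epsilon,\delta,\Theta)=\lceil\ln\delta/\ln(1-\epsilon)\rceil$ and both inequalities in \eref{eq:NumTestMin2} become equalities. The only steps needing any care are verifying that $\sigma$ (and $\caE_x$) are legitimate Choi states and handling the degenerate case $p_\rmA(\Theta,\epsilon)=1$ in the $N_\rmA$ bound; I do not expect a genuine obstacle.
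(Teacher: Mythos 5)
Your proof is correct, but it reaches the bound by a genuinely different route from the paper. The paper obtains $p_\rmA(\Theta,\epsilon)\geq 1-\epsilon$ as a byproduct of its optimality argument: using the convexity of $p_\rmA$ in $\Theta$ together with the invariance $p_\rmA(\Theta_V,\epsilon)=p_\rmA(\Theta,\epsilon)$, it averages an arbitrary strategy over the unitary group (and replaces each $\Omega_j$ by the projector onto $\caU(\rho_j)$), which can only decrease $p_\rmA$; the twirled operator is the isotropic one of \eref{eq:PreparationOptimal}, for which $\tr(\Theta\chi_\caE)=F_\rmA(\caE)$ and hence $p_\rmA=1-\epsilon$ exactly. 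You instead keep $\Theta$ fixed and exhibit explicit ``cheating'' channels---depolarizing-type Choi states---whose acceptance probabilities are controlled by the same two invariants the twirling argument secretly relies on, namely $\<\Phi|\Theta|\Phi\>=1$ and $\tr\Theta\geq d$, and then interpolate using the concavity of $p_\rmA(\Theta,\cdot)$ in $\epsilon$ from \pref{pro:PassProbCC}. Both arguments are sound: your feasibility checks for $\sigma$ and $\caE_x$, the treatment of the empty feasible set for $\epsilon>d/(d+1)$, the monotonicity step for $N_\rmA$, and the saturation via the 2-design strategy with projective verification all go through. The paper's symmetrization yields slightly more information---it shows the 2-design-plus-projector strategy is globally optimal among all strategies, not merely one that saturates the bound---whereas your witness construction is more elementary and self-contained, needing neither the Haar average nor convexity in $\Theta$. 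One cosmetic point: the equality $p_\rmA(\Theta,0)=1$ presumes $\Theta\leq 1$, which is guaranteed only for balanced strategies; for a general $\Theta$ you should claim only $p_\rmA(\Theta,0)\geq 1$, which is all your interpolation actually uses.
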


\section{Balanced strategies}              
The ensemble of test states $\{\rho_j, p_j\}_j$ is \emph{balanced} if $d\sum_j p_j\rho_j=1$, so that $\{dp_j\rho_j\}_j$ is formally a positive operator-valued measure.
In this case,  the verification strategy and the  operators $\Theta,\Theta_\rmP$ are also called balanced. 
Then we have
\begin{equation}\label{eq:PVObalanced}
|\Phi\>\<\Phi|\leq \Theta_\rmP\leq \Theta\leq1, 
\end{equation}
and  the maximally entangled state $|\Phi\>$ is an eigenstate of $\Theta_\rmP$ and $\Theta$ with the largest  eigenvalue~1. Formally, $\Theta_\rmP$ and $\Theta$ are verification operators of $|\Phi\>$, so many results on the verification of maximally entangled states presented in \rcite{ZhuH19O} can be applied to the current study. 
Here is a simple way for constructing  balanced ensembles: If $\{|\psi_j\>\}_j$ forms an orthonormal basis in $\caH$, then the ensemble $\{|\psi_j\>\<\psi_j|,p_j=1/d\}_j$ is balanced. In this case, $\{U|\psi_j\>\}_j$ also forms an orthonormal basis, which is  helpful to simplifying the verification of the output states. In the rest of this paper, we only consider balanced strategies unless it is stated otherwise.

When  $\Theta_\rmP$ and $\Theta$ are balanced, the \emph{spectral gap} $\nu$   of the verification strategy is defined as the spectral gap of $\Theta$, that is,  $\nu=\nu(\Theta):=1-\beta(\Theta)$, where $\beta=\beta(\Theta)$ is the second largest eigenvalue of  $\Theta$. By contrast, the \emph{preparation spectral gap} $\nu_\rmP$  of the verification strategy is defined as the spectral gap of $\Theta_\rmP$, that is, $\nu_\rmP:=1-\beta_\rmP$, where $\beta_\rmP=\beta(\Theta_\rmP)$ is the second largest eigenvalue of  $\Theta_\rmP$. Note that $\beta$ and $\beta_\rmP$ are the operator norms of $\Theta-|\Phi\>\<\Phi|$ and $\Theta_\rmP-|\Phi\>\<\Phi|$, respectively, that is,
\begin{equation}\label{eq:betabetaP}
\beta=\|\Theta-|\Phi\>\<\Phi|\|, \quad \beta_\rmP=\|\Theta_\rmP-|\Phi\>\<\Phi|\|. 
\end{equation}
We have $\beta_\rmP\leq \beta$ and $\nu\leq \nu_\rmP$ given that $\Theta_\rmP\leq \Theta$. In addition,   \eref{eq:ThetaBound} implies that 
\begin{equation}\label{eq:nuLB}
\Theta\leq \nu_\rmM\Theta_\rmP+\beta_\rmM,
\quad \nu\geq \nu_\rmP\nu_\rmM
\end{equation}
given that  the verification strategy is balanced. 
\begin{proposition}\label{pro:pEUB} Suppose $\Theta$ is the process verification operator of a balanced strategy. Then 
\begin{align}
p_\rmE(\Theta,\epsilon)&\leq 1-\nu \epsilon\leq 1-\nu_\rmP\nu_\rmM \epsilon,\label{eq:pEUB} \\
N_\rmE(\epsilon,\delta,\Theta)&\leq \biggl\lceil
\frac{\ln\delta}{\ln(1-\nu\epsilon)}\biggr\rceil\leq  \biggl\lceil
\frac{\ln\delta}{\ln(1-\nu_\rmP\nu_\rmM\epsilon)}\biggr\rceil
\nonumber\\ &\leq \biggl\lceil
\frac{\ln(\delta^{-1})}{\nu_\rmP\nu_\rmM\epsilon}\biggr\rceil. \label{eq:NumTestBal}
\end{align} 
\end{proposition}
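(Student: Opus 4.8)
The plan is to reduce \eqref{eq:pEUB} to a single operator inequality for $\Theta$ and then obtain \eqref{eq:NumTestBal} by elementary monotonicity. First I would record the spectral structure of $\Theta$: for a balanced strategy, \eqref{eq:PVObalanced} and the sentence following it say that $|\Phi\>$ is an eigenvector of $\Theta$ with the largest eigenvalue $1$, while $\beta=\beta(\Theta)$ is the second largest eigenvalue, and $\Theta\leq1$ forces $0\leq\beta\leq1$. Hence on the orthogonal complement of $|\Phi\>$ the operator $\Theta$ is bounded above by $\beta\cdot1$, which gives
\begin{equation}
\Theta\leq|\Phi\>\<\Phi|+\beta\bigl(1-|\Phi\>\<\Phi|\bigr)=\beta+\nu|\Phi\>\<\Phi|,
\end{equation}
with $\nu=1-\beta\geq0$.

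Next I would take an arbitrary feasible point $\chi_\caE$ of the SDP \eqref{eq:pESDP}, that is, $\chi_\caE\geq0$, $\tr_1\chi_\caE=1/d$, and $F_\rmE:=\<\Phi|\chi_\caE|\Phi\>\leq1-\epsilon$. Applying the inequality above and using $\tr\chi_\caE=1$ yields
\begin{equation}
\tr(\Theta\chi_\caE)\leq\beta+\nu F_\rmE=1-\nu(1-F_\rmE)\leq1-\nu\epsilon,
\end{equation}
where the last step uses $\nu\geq0$ together with $1-F_\rmE\geq\epsilon$. Maximizing over all feasible $\chi_\caE$ gives $p_\rmE(\Theta,\epsilon)\leq1-\nu\epsilon$; combining with $\nu\geq\nu_\rmP\nu_\rmM$ from \eqref{eq:nuLB} proves \eqref{eq:pEUB}. (If $\nu=0$ the first bound is the trivial $p_\rmE\leq1$ and the rest of the statement is vacuous, so I would assume $\nu\epsilon<1$ henceforth.)

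Finally I would feed \eqref{eq:pEUB} into the formula \eqref{eq:NumTest}. Since $0<p_\rmE(\Theta,\epsilon)\leq1-\nu\epsilon<1$ and $\ln\delta<0$, the function $x\mapsto\ln\delta/\ln x$ is nondecreasing on $(0,1)$, so replacing $p_\rmE(\Theta,\epsilon)$ first by $1-\nu\epsilon$ and then by $1-\nu_\rmP\nu_\rmM\epsilon$ (legitimate because $\nu\geq\nu_\rmP\nu_\rmM$) can only increase the value; monotonicity of the ceiling then gives the first two inequalities in \eqref{eq:NumTestBal}. The last inequality follows from the elementary bound $-\ln(1-x)\geq x$ for $x\in[0,1)$ applied with $x=\nu_\rmP\nu_\rmM\epsilon$, rewriting $\ln\delta/\ln(1-\nu_\rmP\nu_\rmM\epsilon)=\ln(\delta^{-1})/[-\ln(1-\nu_\rmP\nu_\rmM\epsilon)]\leq\ln(\delta^{-1})/(\nu_\rmP\nu_\rmM\epsilon)$. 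I do not expect a genuine obstacle; the only care needed is in the first step (the spectral bound), to be sure the balanced hypothesis really isolates $|\Phi\>$ as the top eigenvector with eigenvalue exactly $1$ so that the orthogonal complement is controlled by $\beta$, and in tracking the directions of all inequalities once negative logarithms and the ceiling function enter.
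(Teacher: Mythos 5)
Your proof is correct and follows essentially the same route as the paper, which simply defers to the standard spectral-gap argument from quantum state verification (the bound $\Theta\leq\beta+\nu|\Phi\>\<\Phi|$ applied to the Choi state, then monotonicity of $\lceil\ln\delta/\ln x\rceil$ and $-\ln(1-x)\geq x$); you have merely written out the details that the paper cites from \rcite{ZhuH19AdL}.
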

\begin{proof}
In view of \esref{eq:MaxPPEF} and \eqref{eq:pESDP},  the first upper bound for $p_\rmE(\Theta,\epsilon)$  in \eref{eq:pEUB}  
can be proved using a similar idea used to prove Eq.~(1) in \rcite{ZhuH19AdL} (cf.~\rcite{PallLM18}). The second upper bound follows from the inequality $\nu\geq \nu_\rmP\nu_\rmM$ in \eref{eq:nuLB}.  
\Eref{eq:NumTestBal} is a simple corollary of  \esref{eq:NumTest} and \eqref{eq:pEUB}. 
\end{proof}
According to \pref{pro:pEUB}, the efficiency of a balanced strategy is mainly determined by the preparation spectral gap  $\nu_\rmP$ and the measurement spectral gap  $\nu_\rmM$, which characterize the performances of state preparation and measurements, respectively.

\begin{proposition}
For any balanced strategy, the preparation spectral gap satisfies $\nu\leq \nu_\rmP\leq d/(d+1)$; the upper bound for $\nu_\rmP$ is saturated iff  the ensemble of test states is a 2-design. The measurement spectral gap satisfies $\nu_\rmM\leq 1$; the bound is saturated iff  the verification operator for each output state equals  the state projector.
\end{proposition}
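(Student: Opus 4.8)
The plan is to treat the two spectral gaps separately. The inequality $\nu\le\nu_\rmP$ has already been recorded in the text as a consequence of $\Theta_\rmP\le\Theta$ together with \eref{eq:betabetaP}, so the substance of the proposition lies in (i) the bound $\nu_\rmP\le d/(d+1)$ with its saturation condition, and (ii) the bound $\nu_\rmM\le1$ with its saturation condition. For (i) the key is a trace identity: since each $\rho_j$ is a normalized pure state, $\tr(\rho_j\otimes\rho_j^*)=1$, so $\tr\Theta_\rmP=d\sum_j p_j\,\tr(\rho_j\otimes\rho_j^*)=d$. By \eref{eq:PVObalanced} one has $|\Phi\>\<\Phi|\le\Theta_\rmP\le1$ on the $d^2$-dimensional space $\caH\otimes\caH$, with $|\Phi\>$ an eigenvector of eigenvalue $1$, which is therefore the maximal eigenvalue. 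The remaining $d^2-1$ eigenvalues are nonnegative, each at most $\beta_\rmP$, and sum to $\tr\Theta_\rmP-1=d-1$; hence $(d^2-1)\beta_\rmP\ge d-1$, i.e.\ $\beta_\rmP\ge1/(d+1)$, so $\nu_\rmP=1-\beta_\rmP\le d/(d+1)$.

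For the saturation condition of (i), equality in $(d^2-1)\beta_\rmP\ge d-1$ forces every one of the $d^2-1$ non-maximal eigenvalues to equal $1/(d+1)$, so $\Theta_\rmP-|\Phi\>\<\Phi|=\frac{1}{d+1}(1-|\Phi\>\<\Phi|)$, that is $\Theta_\rmP=(d|\Phi\>\<\Phi|+1)/(d+1)$. Dividing by $d$ gives exactly the 2-design condition $\sum_j p_j\rho_j\otimes\rho_j^*=(d|\Phi\>\<\Phi|+1)/(d(d+1))$ stated in the text; conversely, the 2-design condition yields this $\Theta_\rmP$ and hence $\beta_\rmP=1/(d+1)$, $\nu_\rmP=d/(d+1)$.

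For (ii), each $\Omega_j=\sum_l p_{l|j}E_{l|j}$ is a convex combination of operators with $0\le E_{l|j}\le1$, so $0\le\Omega_j\le1$; all eigenvalues of $\Omega_j$ lie in $[0,1]$, hence the second largest eigenvalue $\beta_j$ is nonnegative, $\beta_\rmM=\max_j\beta_j\ge0$, and $\nu_\rmM=1-\beta_\rmM\le1$. Moreover $E_{l|j}\caU(\rho_j)=\caU(\rho_j)$ gives $\Omega_j\caU(\rho_j)=\caU(\rho_j)$, so the rank-one projector $\caU(\rho_j)$ sits in the eigenspace of the maximal eigenvalue $1$. Now $\nu_\rmM=1$ iff $\beta_j=0$ for all $j$. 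If $\beta_j=0$, the eigenvalue $1$ of $\Omega_j$ must have multiplicity $1$ (otherwise the second largest eigenvalue would again be $1$) and every other eigenvalue vanishes, forcing $\Omega_j=\caU(\rho_j)$, the state projector; conversely a rank-one projector has $\beta_j=0$.

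I do not expect a deep obstacle here: once the identity $\tr\Theta_\rmP=d$ is in hand, the rest is bookkeeping against \eref{eq:PVObalanced} and the definitions of $\beta_\rmP$ and $\beta_j$. The one place demanding care is the equality analysis — checking that saturating the averaged inequality $(d^2-1)\beta_\rmP\ge d-1$ pins every non-top eigenvalue to $1/(d+1)$ and that the resulting operator equation is precisely the 2-design condition, and similarly that $\beta_j=0$ excludes any degeneracy of the top eigenvalue of $\Omega_j$ and hence collapses $\Omega_j$ onto the state projector.
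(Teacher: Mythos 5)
Your proof is correct and follows essentially the same route as the paper's: the bound $\nu_\rmP\le d/(d+1)$ via $\tr\Theta_\rmP=d$ together with \eqref{eq:PVObalanced}, saturation pinning $\Theta_\rmP$ to $(d|\Phi\>\<\Phi|+1)/(d+1)$ and hence to the 2-design condition, and the $\nu_\rmM\le 1$ statement read off directly from the definition of $\beta_j$. You have merely made explicit the pigeonhole step on the $d^2-1$ non-maximal eigenvalues and the equality analysis that the paper leaves implicit.
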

\begin{proof}
The inequality	$\nu\leq \nu_\rmP$ follows from the fact that $\Theta\geq \Theta_\rmP$. 
The inequality $\nu_\rmP\leq d/(d+1)$ is equivalent to the inequality  $\beta_\rmP\geq 1/(d+1)$, which follows from  \eref{eq:PVObalanced}   and the equality $\tr(\Theta_\rmP)=d$.
If the inequality is saturated, then $\Theta_\rmP= (d|\Phi\>\<\Phi|+1)/(d+1)$, which implies that the ensemble of test states is a 2-design, in which case the inequality is indeed saturated. The inequality $\nu_\rmM\leq 1$ and the equality condition follow from the definition of $\nu_\rmM$ in \eref{eq:nuM}.
\end{proof}

When $d\geq 3$, it is known that a 2-design can be constructed from $\lceil\frac{3}{4}(d-1)^2\rceil+1$ bases  \cite{RoyS07,LiHZ19}. When $d$ is a prime power, the set of stabilizer states is a 2-design. In addition,
a 2-design can be constructed from a complete set of $d+1$ mutually unbiased bases (MUB) \cite{KlapR05M,RoyS07,Zhu15M}. Recall that two bases $\{|\psi_j\>\}_j$ and $\{|\varphi_k\>\}_k$ are mutually unbiased if $|\<\psi_j|\varphi_k\>|^2=1/d$ for all $j$ and $k$ \cite{Ivan81,WootF89,DurtEBZ10}. If instead the ensemble $\{\rho_j,p_j\}_j$  is constructed from $r$ MUB with uniform weights, then we have $\nu_\rmP=(r-1)/r$ (cf.~Proposition~2 in \rcite{ZhuH19O}). The case of two MUB was studied in \rcite{Hofm05} (cf.~\rcite{MayeK18}).

Define the phase  operator $Z$ and the  shift operator $X$ on $\caH$ as follows,
 \begin{equation}\label{eq:ZX}
 Z|j\>=\omega^j |j\>,\quad X |j\>=|j+1\>, \quad j\in \bbZ_d, 
 \end{equation}
 where $\omega=\rme^{2\pi\rmi/d}$ and $\bbZ_d$ is the ring of integers modulo $d$. Then the two operators generate the (generalized) Pauli group up to overall phase factors. Here we are mainly interested in the fact that
the respective eigenbases of the three operators $Z, X, XZ$  are mutually unbiased \cite{DurtEBZ10}. To be specific, the eigenbasis of $Z$ is just the standard computational basis; the eigenbasis of $X$ is  the Fourier basis and is composed of the kets 
\begin{equation}
\frac{1}{\sqrt{d}}\sum_{j\in \bbZ_d}\omega^{-sj}|j\>, \quad s\in \bbZ_d;
\end{equation}
the eigenbasis of $XZ$ is composed of the kets
\begin{equation}
\frac{1}{\sqrt{d}}\sum_{j\in \bbZ_d}\tau^{(s-j)^2}|j\>, \quad s\in \bbZ_d,
\end{equation}
where $\tau=-\rme^{\pi\rmi/d} $.    If  the dimension $d$ is a  prime, then the respective eigenbases of $Z, X, XZ, XZ^2,..., XZ^{d-1}$ constitute a complete set of  MUB \cite{DurtEBZ10}.

\section{Local state preparation and measurements}               
In practice, it is usually not easy to prepare entangled test states or to perform entangling measurements. It is thus of fundamental interest to clarify  the limitation of local operations on QGV. Suppose the Hilbert space $\caH$ is a tensor product of $n$ Hilbert spaces $\caH=\bigotimes_{k=1}^n \caH_k$ with $\dim (\caH_k)=d_k$ and $d=\prod_{k=1}^n d_k$.  Let $\rmU(\caH_k)$ be the group of unitary operators on $\caH_k$. Suppose we can only prepare product test states and only consider balanced strategies. Then the preparation spectral gap $\nu_\rmP$ is maximized when the product test states are distributed according to the normalized Haar measure induced by $\bigotimes_{k=1}^n\rmU(\caH_k)$ given that $\nu_\rmP=1-\beta_\rmP$ is concave in $\Theta_\rmP$ thanks to \eref{eq:betabetaP}. 
In this case, we have
\begin{equation}
\Theta_\rmP=\bigotimes_{k=1}^n\frac{d_k|\Phi_k\>\<\Phi_k|+1}{d_k+1}, \quad \nu_\rmP=\frac{d_{\min}}{d_{\min}+1},
\end{equation}
where $|\Phi_k\>$ is a maximally entangled state in $\caH_k\otimes \caH_k$ and $d_{\min}=\min_k d_k$ is the minimum of the local dimensions. Note that $\Theta_\rmP$ acts on $(\bigotimes_{k=1}^n \caH_k)^{\otimes 2}$, which is isomorphic to  $\bigotimes_{k=1}^n \caH_k^{\otimes 2}$. 
The above equation also applies if the ensemble of test states is constructed from a product  2-design, that is,  the tensor product of 2-designs for  individual local Hilbert spaces. These conclusions are summarized in the following proposition.
\begin{proposition}
Suppose the ensemble of test states is balanced and only consists of product states on the Hilbert space $\caH$. Then the preparation spectral gap satisfies  $\nu\leq \nu_\rmP\leq d_{\min}/(d_{\min}+1)$. The upper bound for $\nu_\rmP$ is saturated if the ensemble forms a product 2-design. 
\end{proposition}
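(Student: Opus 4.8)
The plan is to reduce an arbitrary balanced product ensemble to a maximally symmetric one by a local-unitary twirl and then read off the spectrum of the resulting preparation operator. The inequality $\nu\le\nu_\rmP$ requires no extra work, since it follows from $\Theta_\rmP\le\Theta$ [cf.~\eref{eq:PVObalanced}]. Thus the content is the bound $\nu_\rmP\le d_{\min}/(d_{\min}+1)$, equivalently $\beta_\rmP\ge 1/(d_{\min}+1)$, for every balanced ensemble $\{\rho_j,p_j\}_j$ of product states.

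First I would symmetrize. Starting from $\Theta_\rmP=d\sum_jp_j\rho_j\otimes\rho_j^*$, replace each $\rho_j$ by $(\bigotimes_kV_k)\rho_j(\bigotimes_kV_k)^\dag$, where $V_k\in\rmU(\caH_k)$; for every fixed choice this again gives a balanced product ensemble, whose preparation operator is $\Theta_\rmP^{(V)}=W\Theta_\rmP W^\dag$ with $W=\bigotimes_kV_k\otimes\bigotimes_kV_k^*$. Since $W|\Phi\>=|\Phi\>$, one has $\Theta_\rmP^{(V)}-|\Phi\>\<\Phi|=W(\Theta_\rmP-|\Phi\>\<\Phi|)W^\dag$, so $\beta(\Theta_\rmP^{(V)})=\beta_\rmP$ for all $V$. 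Averaging each $V_k$ independently over the Haar measure on $\rmU(\caH_k)$ produces $\bar\Theta_\rmP=\int\Theta_\rmP^{(V)}\prod_k\rmd V_k$, and the concavity of $\nu_\rmP$ in $\Theta_\rmP$ [from \eref{eq:betabetaP}] gives $\nu_\rmP(\bar\Theta_\rmP)\ge\int\nu_\rmP(\Theta_\rmP^{(V)})\prod_k\rmd V_k=\nu_\rmP(\Theta_\rmP)$. Hence it suffices to establish the bound for $\bar\Theta_\rmP$.

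Next I would evaluate $\bar\Theta_\rmP$ factor by factor, using the isomorphism $\caH\otimes\caH\cong\bigotimes_k(\caH_k\otimes\caH_k)$, under which $\rho_j\otimes\rho_j^*$ becomes $\bigotimes_k(\rho_j^{(k)}\otimes\rho_j^{(k)*})$ and the $V_k$-integrals decouple. On $\caH_k\otimes\caH_k$ the action $V\mapsto V\otimes V^*$ decomposes into the trivial representation carried by $|\Phi_k\>$ and the adjoint representation carried by the orthogonal complement $1-|\Phi_k\>\<\Phi_k|$; hence the twirl of the pure operator $\rho_j^{(k)}\otimes\rho_j^{(k)*}$, which has trace $1$ and matrix element $\<\Phi_k|(\rho_j^{(k)}\otimes\rho_j^{(k)*})|\Phi_k\>=1/d_k$, equals $(d_k|\Phi_k\>\<\Phi_k|+1)/[d_k(d_k+1)]$, independently of $j$. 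Summing over $j$ (using $\sum_jp_j=1$) and $d=\prod_kd_k$ yields
\begin{equation}
\bar\Theta_\rmP=\bigotimes_{k=1}^n\frac{d_k|\Phi_k\>\<\Phi_k|+1}{d_k+1},
\end{equation}
whose eigenvalues are the products $\prod_k\lambda_k$ with $\lambda_k\in\{1,(d_k+1)^{-1}\}$. The largest eigenvalue is $1$, attained only at $|\Phi\>$; the second largest is obtained by setting exactly one $\lambda_k$ to $(d_k+1)^{-1}$, which is largest for the factor of smallest dimension. Hence $\beta(\bar\Theta_\rmP)=(d_{\min}+1)^{-1}$ and $\nu_\rmP(\bar\Theta_\rmP)=d_{\min}/(d_{\min}+1)$, as claimed.

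For the saturation statement I would observe that when the ensemble is a product $2$-design, i.e.\ each local sub-ensemble on $\caH_k$ is a $2$-design, the same factorization shows directly that $\Theta_\rmP=\bigotimes_k(d_k|\Phi_k\>\<\Phi_k|+1)/(d_k+1)$, so no twirl is needed and $\nu_\rmP=d_{\min}/(d_{\min}+1)$. The main step to get right is the single-factor $V\otimes V^*$ twirl, together with careful bookkeeping of the tensor reordering $\caH\otimes\caH\cong\bigotimes_k(\caH_k\otimes\caH_k)$ and of the complex-conjugation conventions (so that conjugating the unitary on the first system acts correctly on the second); the spectral step then merely records that the combined gap equals the smallest of the local gaps $d_k/(d_k+1)$, i.e.\ it is limited by the factor of smallest dimension.
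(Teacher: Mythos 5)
Your proof is correct and follows essentially the same route as the paper: the paper likewise invokes the concavity of $\nu_\rmP$ in $\Theta_\rmP$ (via \eref{eq:betabetaP}) to reduce any balanced product ensemble to the locally Haar-twirled one and then reads off the spectrum of the resulting tensor-product operator $\bigotimes_k (d_k|\Phi_k\>\<\Phi_k|+1)/(d_k+1)$; you merely spell out the Schur-lemma computation and the spectral bookkeeping that the paper leaves implicit. One small caution on the saturation step: ``product 2-design'' must be read as the tensor product of independent local 2-designs (as the paper defines it), not merely that each local marginal is a 2-design, since otherwise $\Theta_\rmP$ need not factorize.
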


Suppose we can construct $r$ MUB  $\caB_1^{(k)}, \caB_2^{(k)}, \ldots, \caB_r^{(k)}$ for each $\caH_k$. Then the $r$ bases $\bigotimes_{k=1}^n \caB_s^{(k)} $ in $\caH$ for $s=1,2,\ldots, r$ are mutually unbiased. These bases can be used to construct an ensemble of test states, which can achieve the preparation spectral gap  $\nu_\rmP=(r-1)/r$. If each local dimension $d_k$ for $k=1,2,\ldots,n$ is a prime power, then we can construct $d_{\min}$ mutually unbiased product bases in this way, so the preparation spectral gap can attain the maximum $d_{\min}/(d_{\min}+1)$ (over local preparation strategies).  In general, we can construct at least three mutually unbiased product bases using eigenstates of Pauli operators given the discussion after \eref{eq:ZX}, so the preparation spectral gap achievable in this way is at least $2/3$. The restriction to  local state preparation thus has little  influence  on the verification efficiency.

By contrast, the limitations of local measurements depend on the unitary transformation to  be verified and the test states chosen. Nevertheless,  many important quantum 
gates and circuits can be verified efficiently
using local measurements in addition to 
local state preparation, as shown in the next section.

\section{Applications}               
To illustrate the power of the general framework of QGV proposed above, here we construct efficient balanced verification protocols for a number of important quantum gates and circuits using local operations.

Let us start with bipartite unitary transformations. The test states can be constructed from a product  2-design, so that the preparation spectral gap can attain the maximum $d_{\min}/(d_{\min}+1)$ (over local state preparation). Alternatively, we can achieve a preparation spectral gap of $2/3$ by virtue of  mutually unbiased product bases
constructed from eigenstates of Pauli operators. To verify general bipartite unitary transformations efficiently with local measurements, we need to verify general bipartite pure states efficiently with local measurements. Fortunately, the later problem has been solved recently \cite{PallLM18,LiHZ19,YuSG19,WangH19}. For any bipartite pure state, we can construct a verification strategy using local measurements which can achieve a spectral  gap of at least $2/3$. Such an efficient strategy can be constructed by performing the standard test (based on projective measurements on the Schmidt basis) and a number of adaptive local projective tests with  two-way communication. In addition, the measurement spectral gap can reach the value of  $1/2$ even if we can perform only two distinct tests
based on one-way communication \cite{LiHZ19}.  The number of tests required by such a simplified strategy is comparable to the optimal strategy. Therefore, all bipartite unitary transformations can be verified efficiently with local state preparation and measurements as summarized in \tref{tab:NumTestQGV}. By contrast, RB usually only applies to gate sets with special structure, such as the Clifford group \cite{EmerAZ05,KnilLRB08,MageGE11,WallF14}. Although generalizations have been considered by many researchers \cite{MageGJR12,CariWE15,CrosMBS16,HarpF17,OnorWE19,HelsXVW19}, a simple recipe for a generic bipartite unitary transformation is still not available.

\begin{table*}
	\caption{\label{tab:NumTestQGV}
		Verification of bipartite and multipartite unitary transformations based on local state preparation and local projective measurements. All strategies considered are balanced. Here $\nu_\rmP$ and $\nu_\rmM$ denote the preparation spectral gap and the measurement spectral gap; $\nu$  (LB) denotes a lower bound for the spectral gap. $N_\rmE(\epsilon,\delta)$ (UB) denotes an upper bound for the minimum number of tests required to verify the target unitary transformation within entanglement infidelity $\epsilon$ and significance level $\delta$.
	}		
	\begin{math}
	\begin{array}{c|ccccccc}
	\hline\hline
	\mbox{Unitary} &\mbox{Local dimension}&\mbox{Preparation} &\mbox{Measurement}& \nu_\rmP &\nu_\rmM &\nu \mbox{ (LB)} & N_\rmE(\epsilon, \delta) \mbox{ (UB)}\\[0.5ex]
	\hline
	\mbox{Bipartite} & d_1, d_2&  \mbox{Pauli}  &\mbox{Local adaptive} & \frac{2}{3}&\frac{2}{3} & \frac{4}{9}& \bigl\lceil\frac{9}{4} \epsilon^{-1}\ln\delta^{-1}\bigr\rceil \\[0.5ex]
	\mbox{Bipartite} & d_1, d_2& \mbox{Product}&\mbox{Local adaptive}& \frac{d_{\min}}{d_{\min}+1}&\frac{2}{3} & \frac{2d_{\min}}{3(d_{\min}+1)} & \bigl\lceil\frac{3(d_{\min}+1)}{2d_{\min}} \epsilon^{-1}\ln\delta^{-1}\bigr\rceil \\[0.5ex]
	\mbox{Clifford}&2 &\mbox{Pauli} & \mbox{Pauli} & \frac{2}{3} & \frac{1}{2}&\frac{1}{3} & \lceil 3\epsilon^{-1}\ln\delta^{-1}\rceil \\[0.5ex]
	\mbox{Clifford}& d_1 \mbox{ odd prime}& \mbox{Pauli} & \mbox{Pauli} &  \frac{d_1}{d_1+1} &\frac{d_1-1}{d_1}& \frac{d_1-1}{d_1+1}&	
	\bigl\lceil \frac{d_1+1}{d_1-1}\epsilon^{-1}\ln\delta^{-1}\bigr\rceil \\[0.5ex]		
	\mbox{Clifford}& d_1 & \mbox{Pauli} & \mbox{Pauli} &\frac{2}{3} &\frac{1}{n}&	\frac{2}{3n}&
	\bigl\lceil \frac{3n}{2}\epsilon^{-1}\ln\delta^{-1}\bigr\rceil \\[0.5ex]		
	C^{(n-1)}Z  & d_1 & \mbox{Pauli} & \mbox{Pauli} & \frac{1}{2}& \frac{1}{n}&\frac{1}{2n}& \lceil 2n\epsilon^{-1}\ln\delta^{-1}\rceil\\[0.5ex]
	C^{(n-1)}X  &d_1 & \mbox{Pauli} & \mbox{Pauli} & \frac{1}{2}& \frac{1}{n}&\frac{1}{2n}& \lceil 2n\epsilon^{-1}\ln\delta^{-1}\rceil\\[0.5ex]
	\mbox{CSWAP} &2 & \mbox{Pauli} & \mbox{Pauli} & \frac{2}{3}& \frac{2}{3}&\frac{4}{9}& \bigl\lceil \frac{9}{4}\epsilon^{-1}\ln\delta^{-1}\bigr\rceil\\[0.5ex]
	\mbox{Permutation}  &d_1 & \mbox{Pauli} & \mbox{Pauli} & \frac{2}{3}& 1 &\frac{2}{3}& \bigl\lceil \frac{3}{2}\epsilon^{-1}\ln\delta^{-1}\bigr\rceil\\[0.5ex]
	\mbox{Permutation}  &d_1 & \mbox{Product} & \mbox{Product} & \frac{d_1}{d_1+1}& 1 & \frac{d_1}{d_1+1}& \bigl\lceil \frac{d_1+1}{d_1}\epsilon^{-1}\ln\delta^{-1}\bigr\rceil\\[0.5ex]
	\hline\hline
	\end{array}	
	\end{math}
\end{table*}

Next, consider the verification of $n$-qubit  Clifford unitaries. Here we only require Pauli operations: preparation of eigenstates of local Pauli operators and measurements of local Pauli operators. 
The ensemble of test states can be constructed from three mutually unbiased product bases, namely, the standard basis, the eigenbasis of all $X$ operators for individual qubits, and the eigenbasis of all $Y=\rmi XZ$ operators. In this way, the preparation spectral gap can attain the maximum  of $2/3$.  Under the action of a Clifford unitary, each test state is turned into a stabilizer state, which can be verified efficiently with Pauli measurements. More precisely,  we can measure all nontrivial stabilizer operators with an equal probability, and the resulting measurement spectral gap is $2^{n-1}/(2^n-1)\geq 1/2$ \cite{PallLM18,ZhuH19AdL} (the performance can be improved if the stabilizer state is a product state). These results apply to both individual gates and gate sets, including  whole Clifford circuits.

When  the local dimension $d_1$ is an odd prime, the ensemble of test states can be constructed using $d_1+1$ mutually unbiased product  bases  from eigenstates of Pauli operators, in which case the preparation spectral gap is $d_1/(d_1+1)$.
In addition, each output stabilizer state can be verified  efficiently using  a protocol proposed in  \rcite{ZhuH19AdL}, and the resulting  measurement spectral gap is
\begin{equation} \nu_\rmM=\frac{d_1^{n}-d_1^{n-1}}{d_1^n-1}\geq \frac{d_1-1}{d_1}.
\end{equation}
When  the local dimension is not a prime, the ensemble of test states can be constructed using three mutually unbiased product  bases  from eigenstates of Pauli operators, in which case the preparation spectral gap is  $2/3$. Each output stabilizer state can be verified by measuring  $n$ stabilizer generators with an equal probability; the resulting measurement spectral gap  is $1/n$. The performance may be improved by virtue of the cover or coloring protocol proposed in \rcite{ZhuH19E}. Therefore, Clifford gates and circuits can be verified efficiently even if the local dimension is not a prime (see \tref{tab:NumTestQGV}). With respect to the infidelity, the number of tests required by our protocol is quadratically 
fewer  than what is required by direct fidelity estimation \cite{FlamL11}.

Next, consider the verification of the generalized controlled-$Z$ gate $C^{(n-1)}Z$ with $n-1$ control qubits. Now test states can be constructed from two mutually unbiased product bases, namely, the computational basis
and the eigenbasis of all   $X$ operators for individual qubits. The resulting preparation spectral gap is $1/2$.  Under the action of $C^{(n-1)}Z$, each state in the computational basis is invariant and so can be verified by performing the projective measurement on the computational basis. By contrast, each state in the other basis is turned into an order-$n$ hypergraph state \cite{QuWLB13,RossHBM13}, which can be verified efficiently using Pauli measurements thanks to the coloring protocol proposed in \rcite{ZhuH19E};  the resulting measurement spectral gap is $1/n$.
The same method can also be applied to verifying
circuits composed of generalized controlled-$Z$ gates, which is useful to studying instantaneous quantum polytime (IQP) circuits and to demonstrating quantum supremacy \cite{ShepB09, BremMS16,ArutABB19ab}. 
With minor modification, this approach can be applied to verifying  generalized controlled-$X$  gates (CNOT), including the Toffoli gate \cite{NielC10book}, given that $C^{(n-1)}X= H_n C^{(n-1)}ZH_n$, where $H_n$ is the Hadamard gate acting on the $n$th qubit. Generalization to the qudit setting is also straightforward.

Next, consider  the   CSWAP gate. The test states can be constructed from three mutually unbiased product bases as in the verification of Clifford unitaries. Under the action of the CSWAP gate, each state in the computational basis  still belongs to the computational basis and so can easily be verified. Each state in the other two bases is either invariant (still  a product state) or turned into a Greenberger-Horne-Zeilinger (GHZ) state up to a local Clifford transformation; the former can be verified by a Pauli measurement, while the latter can be verified by an efficient protocol based on Pauli measurements \cite{LiHZ19O}. The resulting measurement spectral gap is $2/3$. So the CSWAP gate can be verified efficiently as well.

Finally, let us consider the verification of permutation transformations of a multiqudit system with local dimension $d_1$. Compared with generic Clifford unitary transformations, these permutations can be verified slightly more efficiently. 
The ensemble of test states can be constructed from a product 2-design, so that the preparation spectral gap can attain the maximum $d_1/(d_1+1)$. Meanwhile, the measurement spectral gap can attain the maximum 1 since any permutation of a product state is still a product state. Therefore, the spectral gap 
of the process verification operator can attain the maximum  $d_1/(d_1+1)$ over all balanced strategies based on local operations.
Alternatively, the ensemble of test states can be constructed from $r$ mutually unbiased product bases (with $2\leq r\leq d_1+1$), in which case the preparation spectral gap is $\nu_\rmP=(r-1)/r$. Such a strategy is much easier to implement, and the number of tests required is at most two times compared with the optimal strategy even if $r=2$. For example, three mutually unbiased product bases can be constructed from eigenstates of Pauli operators,  which yield $\nu_\rmP=2/3$.  When $d_1$ is a prime,  $d_1+1$ mutually unbiased product bases can be constructed from eigenstates of Pauli operators, which yield $\nu_\rmP=d_1/(d_1+1)$.

\section{Summary}               
We proposed  a simple and general framework for verifying unitary transformations that can be applied to both individual gates and gate sets, including quantum circuits. Our approach enables   efficient and scalable  verification of all bipartite unitaries, Clifford unitaries, generalized controlled-$Z$ gates,  generalized CNOT gates, the CSWAP gate, and permutation transformations using local state preparation and measurements. Moreover, only Pauli operations are required in many cases of practical interest as shown in \tref{tab:NumTestQGV}. It is applicable in a number of scenarios in which RB does not apply and is thus instrumental to quantum computation and various other applications in quantum information processing. Nevertheless, our approach  is not a replacement for RB since the latter is more robust against SPAM. Instead, combining the merits of our approach and those of  RB may lead to a more powerful tool. We hope that our work can stimulate further progresses in this direction.

%
%\bigskip

\acknowledgments
This work is  supported by  the National Natural Science Foundation of China (Grant No. 11875110).

\emph{Note added}. Recently, we became aware of  related works by Liu et al.  \cite{LiuSYZ20} and Zeng et al. \cite{ZengZL19}.

%%%%%%%%%%%%%%%%%%%%%%%%%%%%%%%%%%%%%%%%%%%%%%%%%%%%%%%%%%%%%%%%%%%%%%%%%%%%%%%%%%%%%%%%

%@CONTROL{REVTEX41Control}
%@CONTROL{apsrev41Control,author="48",editor="1",pages="1",title="0",year="0"}

\nocite{apsrev41Control}
\bibliographystyle{apsrev4-1}
\bibliography{all_references}

\end{document}